\documentclass[11pt]{article}

\usepackage[margin=1in]{geometry}
\usepackage[T1]{fontenc}
\usepackage[utf8]{inputenc}
\usepackage{lmodern}
\usepackage{microtype}
\usepackage{setspace}
\usepackage{amsmath}
\usepackage{amssymb}
\usepackage{amsfonts}
\usepackage{amsthm}
\usepackage{mathtools}
\usepackage{bbm}
\usepackage{bm}

\newtheorem{theorem}{Theorem}[section]
\newtheorem{proposition}[theorem]{Proposition}

\theoremstyle{definition}

\newtheorem{assumption}[theorem]{Assumption}

\usepackage{booktabs}
\usepackage{array}
\usepackage{longtable}
\usepackage{multirow}
\usepackage{graphicx}
\usepackage{float}

\usepackage[authoryear,round]{natbib}
\usepackage{xcolor}
\usepackage[
colorlinks=true,
linkcolor=blue,
citecolor=blue,
urlcolor=blue
]{hyperref}
\usepackage[nameinlink,noabbrev]{cleveref}

\usepackage{enumitem}
\usepackage{xparse}

\newcommand{\E}{\mathbb{E}}
\newcommand{\Pp}{\mathbb{P}}

\newcommand{\Unif}{\operatorname{Unif}}
\newcommand{\BB}{\operatorname{BetaBinomial}}
\newcommand{\logit}{\operatorname{logit}}

\makeatletter
\RenewDocumentCommand{\title}{o m}{\gdef\@title{#2}}
\RenewDocumentCommand{\author}{s o m}{\gdef\@author{#3}}
\newcommand{\@affiliation}{}
\NewDocumentCommand{\affil}{s o m}{\gdef\@affiliation{#3}}

\renewcommand{\abstract}[1]{%
    \begin{center}
        \begin{minipage}{0.92\textwidth}
            \small
            \noindent\textbf{Abstract.}\quad #1
        \end{minipage}
    \end{center}
    \vspace{0.8em}
}
\newcommand{\keywords}[1]{%
    \begin{center}
        \begin{minipage}{0.92\textwidth}
            \small
            \noindent\textbf{Keywords:} #1
        \end{minipage}
    \end{center}
    \vspace{1em}
}

\renewcommand{\maketitle}{%
    \begin{center}
        {\LARGE\bfseries \@title\par}
        \vspace{1.2em}
        {\large \@author\par}
        \ifx\@affiliation\@empty\else
        \vspace{0.6em}
        {\small \@affiliation\par}
        \fi
    \end{center}
    \vspace{1.2em}
}
\makeatother

\begin{document}

\title{Marginal Persistence and Dynamic Copula Dependence in Sovereign Rating Migration Counts:\\
A Discrete Interval-Likelihood MAGMAR Analysis}
\author{Marina Palaisti}
\maketitle

\abstract{%
This paper develops an observed-data likelihood for applying moving-aggregate
modified autoregressive (MAGMAR) copula time-series models to discrete sovereign
rating-migration counts with time-varying exposure. An annual count identifies a
probability-integral-transform interval rather than a unique latent point, so the
likelihood integrates the latent process over the complete sequence of count
intervals. A guided sequential Monte Carlo implementation incorporates the
parameter-dependent stationary marginal adjustment of MAGMAR directly inside this
integration. In sovereign ratings from Fitch, Moody's, and S\&P over 1963--2025,
the migration counts are strongly overdispersed relative to a binomial margin and,
under an exposure-conditioned static beta-binomial margin, MAGMAR improves
substantially on MAG. The interpretation changes
once the marginal mean is allowed to depend on the lagged migration rate: residual
serial correlation in the count margin disappears, a reference dynamic-margin fit
gives only a small MAGMAR improvement, and information criteria favor MAG. Numerical
profiles, higher-precision refits, and parameter-recovery experiments show that
marginal persistence and copula persistence are only weakly separated in this short
discrete series. The results therefore support interval integration as the correct
observation layer while showing that conclusions about latent dynamic dependence
must be conditioned on the specification of the count margin.}

\keywords{copula time series, sovereign credit risk, rating migrations,
discrete time series, beta-binomial, sequential Monte Carlo, MAGMAR}

\section{Introduction}
\label{sec:intro}

Sovereign rating migrations affect bond portfolios, bank and insurance balance
sheets, sovereign-backed lending, eligibility rules, and migration-based measures
of credit risk. Most quantitative work models the probability or intensity of a
transition between rating states. The present paper asks a different time-series
question: after accounting for the number of country--agency relationships exposed
to a possible migration, does aggregate annual migration activity display persistent
dynamic dependence?

Two features make this question nonstandard. First, the annual number of eligible
country--agency pairs changes sharply through time. A count of 20 migrations has a
different interpretation when 60 pairs are at risk than when 300 are at risk.
Second, migration counts are discrete. If the conditional count distribution in
year $t$ has CDF $F_t$, observing $A_t=a_t$ implies only that a latent uniform
variable lies in
\[
\bigl(F_t(a_t^-),F_t(a_t)\bigr],
\]
rather than revealing a unique probability-integral-transform value. A continuous
copula likelihood evaluated at one jittered point inside this interval is therefore
not the observed-data likelihood of the count sequence.

The methodological contribution of the paper is to construct an explicit discrete
observation layer for the MAG and MAGMAR copula processes of
\citet{pappert2026mag}. The latent recursion is retained, but the observation map is
formulated through exposure-conditioned count intervals. For each candidate
parameter vector, the stationary marginal CDF of the raw MAGMAR state is estimated,
the observed count intervals are mapped back to raw-state intervals, and a guided
particle filter integrates the conditional latent density over those intervals.
This produces a simulated approximation to the observed-count rectangle likelihood.

The empirical contribution is to separate two forms of persistence that can look
similar in a short discrete series. Under a static beta-binomial margin, Gaussian
MAGMAR is much better than Gaussian MAG: the publication fit gives log likelihoods
$-170.69$ and $-178.88$, respectively. The fitted MAGMAR dependence is characterized
by a strong positive autoregressive copula component and a negative moving-aggregate
component. Yet the static count margin leaves substantial residual serial
correlation. Once the beta-binomial mean is allowed to depend on the previous year's
migration rate, the residual Ljung--Box statistic falls from 102.43 to 10.04 at lag
10, with nominal $p=0.437$ for the dynamic margin. Under this richer observation
model, a reference joint fit yields only $0.77$ log-likelihood units of improvement
from MAGMAR over MAG, while AIC and BIC favor MAG.

The dynamic-margin results also reveal a finite-sample identification issue that is
substantively important rather than merely computational. Independent profile
likelihoods rise toward large positive values of the MAGMAR autoregressive parameter,
and higher-precision refits can move from a moderate-$\phi$ solution to a
high-$\phi$ solution. Parameter-recovery simulations reproduce the margin
coefficients well but recover $\phi$ and $\theta$ only weakly. Low-exposure
sensitivities likewise do not produce a stable MAGMAR advantage. The empirical
message is therefore not that one dependence specification dominates under every
margin. It is that discrete interval integration changes the inferential object,
and that marginal dynamics must be modeled carefully before persistent copula
dependence is interpreted structurally.

The paper makes four contributions:
\begin{enumerate}[leftmargin=2em]
    \item it formulates an exposure-conditioned discrete observation model for
    MAG/MAGMAR in which counts identify latent probability intervals;
    \item it develops a guided sequential Monte Carlo approximation to the resulting
    observed-count likelihood, including the parameter-dependent stationary marginal
    adjustment inside the likelihood calculation;
    \item it compares static and dynamic beta-binomial observation layers and shows
    how marginal persistence changes the apparent evidence for latent dependence;
    \item it provides numerical diagnostics for Monte Carlo error, profile geometry,
    probability-grid resolution, finite-sample parameter recovery, influential
    years, and direction-specific behavior.
\end{enumerate}

The remainder of the paper is organized as follows. Section~\ref{sec:literature}
reviews the related literature. Section~\ref{sec:data} describes the sovereign
ratings, exposure, and beta-binomial observation models. Section~\ref{sec:model}
develops the discrete MAG/MAGMAR likelihood. Section~\ref{sec:estimation} describes
simulation-based estimation and the validation design. Section~\ref{sec:results}
reports the empirical results, Section~\ref{sec:discussion} discusses interpretation,
and Section~\ref{sec:conclusion} concludes.

\section{Related literature and methodological background}
\label{sec:literature}

This paper connects three literatures: credit-rating transitions, copula models for
discrete time series, and simulation-based likelihood evaluation.

Rating-transition models commonly use discrete- or continuous-time Markov
structures. \citet{jarrow1997} develop a Markov framework for credit-risk spreads,
while \citet{lando2002} study rating transitions and rating drift using continuous
observations. \citet{bangia2002} show that migration behavior changes with the
business cycle, and \citet{figlewski2012} analyze macroeconomic effects on default
and rating-transition dynamics. Those models are designed around transition
probabilities, intensities, or transition matrices. Here the empirical object is the
aggregate number of annual sovereign rating migrations, modeled as a univariate
discrete time series conditional on the number of rating relationships exposed to a
possible migration.

Copula models separate marginal behavior from dependence through Sklar's theorem
\citep{sklar1959}; see \citet{joe2014} for a comprehensive treatment. This
separation is attractive when the marginal distribution changes mechanically with
exposure. Discrete margins, however, require care. For integer-valued observations
the copula is observed only on the range of the marginal CDF, so the latent uniform
is interval-censored rather than observed exactly. \citet{genestneslehova2007}
discuss the resulting identification issues for count data, while
\citet{trivedizimmer2017} emphasize that covariate variation, and hence variation in
observable CDF grids, can help distinguish parametric dependence structures.
\citet{weiss2018intcount} provides a broader treatment of discrete-valued time
series.

The latent dynamic model is the MAGMAR copula process of
\citet{pappert2026mag}. MAGMAR augments an autoregressive copula recursion with a
moving-aggregate component driven by current and lagged innovations. A key feature
for the present application is that the stationary marginal distribution of the raw
MAGMAR process need not be uniform. The stationary marginal transformation that
restores uniformity depends on the candidate dependence parameters. In a discrete
observation model this transformation therefore changes the raw-state intervals
associated with each observed count and must be recomputed inside the likelihood.

The resulting observed-data likelihood is a high-dimensional rectangle probability
for a nonlinear latent process and is not available in closed form. We evaluate it
sequentially using particle importance sampling and systematic resampling, following
the general SMC logic developed for nonlinear and non-Gaussian state-space models;
see \citet{delmoral2004} and \citet{doucetjohansen2011}. Common random numbers are
used within an optimization run so that the simulated objective is reproducible and
substantially smoother than an objective based on independent simulation at every
parameter evaluation.

\section{Data and the discrete observation layer}
\label{sec:data}

\subsection{Sovereign ratings and annual migration counts}

The empirical application uses sovereign credit ratings from Fitch, Moody's, and
S\&P obtained from TheGlobalEconomy. Ratings are standardized to an ordered numeric
scale on which larger values denote stronger credit quality. The raw information is
converted to a country--agency--year panel. For each country--agency pair, the annual
rating is compared with its preceding eligible annual rating. A downgrade occurs
when the current rating is lower, an upgrade when it is higher, and a migration when
either event occurs.

The modeled observation is the number of eligible country--agency pairs whose annual
rating differs from the preceding annual rating, rather than the number of every
within-year rating announcement. The construction starts from 9,748 raw rating rows;
187 NR/WR/SD/RD rows and 29 rows with unmapped rating codes are excluded. The
resulting annual panel contains 10,290 country--agency--year rows and 459
country--agency pairs. There are 9,831 eligible annual comparisons and 2,109
migrations, comprising 952 downgrades and 1,157 upgrades.

Let $E_t$ denote the number of country--agency pairs eligible for a migration
comparison in year $t$. Let $N_t^{-}$ and $N_t^{+}$ denote the numbers of downgrades
and upgrades, respectively, and define
\begin{equation}
A_t=N_t^{-}+N_t^{+}.
\label{eq:At}
\end{equation}
Pairs with fewer than three distinct annual observations are removed before
aggregation, so the minimum-history rule is applied at the country--agency level.
The aggregate series contains 63 annual observations from 1963 through 2025. The
migration count ranges from 0 to 101 and exposure from 1 to 390.

\begin{table}[t]
\centering
\caption{Construction audit and aggregate sovereign migration sample.}
\label{tab:data-summary}
\begin{tabular}{lr}
\toprule
Quantity & Value \\
\midrule
Raw rating rows & 9,748 \\
Excluded NR/WR/SD/RD rows & 187 \\
Rows with unmapped rating codes & 29 \\
Country--agency--year rows in annual panel & 10,290 \\
Country--agency pairs & 459 \\
Eligible annual comparison rows & 9,831 \\
Total migrations & 2,109 \\
Downgrades & 952 \\
Upgrades & 1,157 \\
Aggregate sample & 1963--2025 \\
Number of annual observations & 63 \\
Range of migration count $A_t$ & 0--101 \\
Range of exposure $E_t$ & 1--390 \\
\bottomrule
\end{tabular}
\end{table}

\subsection{Static and dynamic beta-binomial margins}
\label{subsec:bbmargin}

Because $A_t$ counts migrations among $E_t$ eligible pairs,
\[
A_t\in\{0,1,\ldots,E_t\}.
\]
For a generic exposure-conditioned discrete margin write
\[
F_t(a;\gamma)=\Pp_\gamma(A_t\le a\mid E_t,\mathcal F_{t-1}),
\qquad
m_t(a;\gamma)=\Pp_\gamma(A_t=a\mid E_t,\mathcal F_{t-1}).
\]

The observation layer is beta-binomial. Conditional on a migration probability
$p_t\in(0,1)$ and precision parameter $\kappa>0$,
\begin{equation}
A_t\mid E_t,\mathcal F_{t-1}
\sim
\BB(E_t,p_t,\kappa),
\label{eq:bb-margin}
\end{equation}
with beta shape parameters $\alpha_t=p_t\kappa$ and
$\beta_t=(1-p_t)\kappa$. Its probability mass is
\begin{equation}
m_t(a;p_t,\kappa)
=
\binom{E_t}{a}
\frac{B(a+p_t\kappa,E_t-a+(1-p_t)\kappa)}
     {B(p_t\kappa,(1-p_t)\kappa)},
\label{eq:bb-pmf}
\end{equation}
and
\begin{align}
\E(A_t\mid E_t,\mathcal F_{t-1})&=E_t p_t,\\
\operatorname{Var}(A_t\mid E_t,\mathcal F_{t-1})
&=E_t p_t(1-p_t)\frac{E_t+\kappa}{1+\kappa}.
\label{eq:bb-var}
\end{align}

Two specifications are used. The \emph{static} beta-binomial margin sets
\begin{equation}
\logit(p_t)=\beta_0.
\label{eq:static-margin}
\end{equation}
The \emph{dynamic} margin uses the lagged observed migration rate,
\begin{equation}
\logit(p_t)=\beta_0+\beta_1 R_{t-1},
\qquad
R_{t-1}=\frac{A_{t-1}}{E_{t-1}}.
\label{eq:dynamic-margin}
\end{equation}
The dynamic analysis therefore uses 62 observations from 1964--2025. Setting
$\beta_1=0$ recovers the static mean specification on the same sample.

For every admissible count define the CDF interval
\begin{align}
\ell_t(a;\gamma)&:=F_t(a^-;\gamma),\\
r_t(a;\gamma)&:=F_t(a;\gamma),\\
I_t(a;\gamma)&:=\bigl(\ell_t(a;\gamma),r_t(a;\gamma)\bigr].
\label{eq:count-interval}
\end{align}
Its width is exactly the probability mass,
\begin{equation}
|I_t(a;\gamma)|=m_t(a;\gamma).
\label{eq:interval-width}
\end{equation}
Thus an observed count identifies a region of latent probability space rather than
a single latent point.

\begin{proposition}[Exposure-conditional distributional transform]
\label{prop:dt}
Suppose $A_t\mid E_t,\mathcal F_{t-1}$ has CDF $F_t(\cdot;\gamma_0)$ and let
$V_t\sim\Unif(0,1)$ be independent of the count process. Then
\begin{equation}
U_t^{(V)}
=
F_t(A_t^-;\gamma_0)
+
V_t\,m_t(A_t;\gamma_0)
\label{eq:randomized-dt}
\end{equation}
is conditionally uniform on $(0,1)$ and satisfies
$A_t=F_t^{-1}(U_t^{(V)};\gamma_0)$ almost surely.
\end{proposition}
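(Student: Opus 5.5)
The plan is to condition throughout on $(E_t,\mathcal F_{t-1})$. Given this conditioning, $F_t(\cdot;\gamma_0)$ is a fixed step function supported on $\{0,1,\ldots,E_t\}$. The map $U_t^{(V)}$ is then a deterministic function of the pair $(A_t,V_t)$, and by assumption $V_t$ is independent of $A_t$ given this information. I would compute the conditional CDF of $U_t^{(V)}$ directly by decomposing over the possible count values.

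\textbf{Step 1: partition of $(0,1]$.} The intervals $I_t(a;\gamma_0)$ from \eqref{eq:count-interval}, for $a=0,\ldots,E_t$, partition $(0,1]$. Two facts give this. First, $\ell_t(a)=r_t(a-1)$. Second, $\ell_t(0)=0$ and $r_t(E_t)=1$. Their widths are $m_t(a;\gamma_0)$ by \eqref{eq:interval-width}.

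\textbf{Step 2: conditional uniformity.} On the event $\{A_t=a\}$, the variable $U_t^{(V)}=\ell_t(a)+V_t m_t(a)$ is uniform on $I_t(a)$, provided $m_t(a)>0$. This holds because $V_t$ is independent of $A_t$. For $u\in(0,1)$, I then write
\[
\Pp\bigl(U_t^{(V)}\le u\mid E_t,\mathcal F_{t-1}\bigr)=\sum_{a} m_t(a)\,\Pp\bigl(\ell_t(a)+V_t m_t(a)\le u\bigr).
\]
Let $a^\ast$ be the count with $u\in I_t(a^\ast)$. Every term with $a<a^\ast$ contributes its full mass $m_t(a)$, and every term with $a>a^\ast$ contributes $0$. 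The term for $a^\ast$ contributes $m_t(a^\ast)\,(u-\ell_t(a^\ast))/m_t(a^\ast)=u-\ell_t(a^\ast)$. Since the masses below $a^\ast$ sum to $\ell_t(a^\ast)$, the total is $\ell_t(a^\ast)+u-\ell_t(a^\ast)=u$. The conditional law therefore does not depend on the conditioning information, which yields both conditional and unconditional uniformity.

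\textbf{Step 3: inversion.} Take the generalized inverse $F_t^{-1}(u)=\inf\{a:F_t(a)\ge u\}$. For $u\in(\ell_t(a),r_t(a)]$ this returns $a$ exactly, because $F_t(a-1)=\ell_t(a)<u\le F_t(a)$. So $F_t^{-1}(U_t^{(V)})=A_t$ whenever $U_t^{(V)}$ lies in $I_t(A_t)$ rather than at its left endpoint, which is the case whenever $V_t>0$.

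\textbf{Main obstacle.} The main difficulty is the null-set bookkeeping behind the almost-sure claim. Two exceptional sets must be shown to have probability zero. The first is $\{V_t=0\}$, which has probability zero because $V_t$ is uniform. The second is the set of counts with $m_t(a)=0$; these occur with conditional probability zero, and in any case $m_t(a)>0$ for every $a\le E_t$ under the beta-binomial pmf \eqref{eq:bb-pmf}. Handling both sets makes the inversion identity hold almost surely. The remaining steps are routine.
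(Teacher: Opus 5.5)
Your proof is correct and follows the same route as the paper's. The cells $I_t(a;\gamma_0)$ partition $(0,1]$ with widths $m_t(a;\gamma_0)$, $U_t^{(V)}$ is uniform on $I_t(A_t;\gamma_0)$ given $A_t$, mixing with weights $m_t(a;\gamma_0)$ gives a uniform law, and inversion follows from $U_t^{(V)}\in I_t(A_t;\gamma_0)$ almost surely; your explicit CDF computation and null-set bookkeeping ($V_t=0$, zero-mass counts) just spell out steps the paper states in one line each.
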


\begin{proof}
Conditional on the observed conditioning variables, the intervals
$I_t(a;\gamma_0)$ partition $(0,1]$ and have lengths equal to the corresponding
conditional count probabilities. Conditional on $A_t=a$, expression
\eqref{eq:randomized-dt} is uniform on $I_t(a;\gamma_0)$. Mixing these conditional
uniform distributions with weights $m_t(a;\gamma_0)$ therefore produces a uniform
distribution on $(0,1)$. The inverse-CDF statement follows from
$U_t^{(V)}\in I_t(A_t;\gamma_0)$ almost surely.
\end{proof}

The distributional transform is useful for diagnostics, but the observed-data
likelihood below integrates over the full interval rather than conditioning on one
realization of $V_t$.

\section{Discrete MAG and MAGMAR interval likelihood}
\label{sec:model}

\subsection{Gaussian MAG(1) and MAGMAR(1,1)}

Let $\{W_t\}_{t\in\mathbb Z}$ be iid $\Unif(0,1)$. For a bivariate copula
$C_\rho$ define
\[
h_\rho(u,v)=\frac{\partial}{\partial u}C_\rho(u,v),
\]
and let $h_\rho^{-1}(\cdot,v)$ denote inversion in the first argument. The empirical
implementation uses Gaussian copulas for the autoregressive and moving-aggregate
components.

The Gaussian MAG(1) process is
\begin{equation}
U_t=h_\theta^{-1}(W_t,W_{t-1}),
\label{eq:mag}
\end{equation}
where $\theta\in(-1,1)$. The Gaussian MAGMAR(1,1) process is
\begin{equation}
U_t
=
h_\phi^{-1}\!\left(
 h_\theta^{-1}(W_t,W_{t-1}),
 U_{t-1}
\right),
\label{eq:magmar}
\end{equation}
where $\phi\in(-1,1)$ is the autoregressive copula parameter and
$\theta\in(-1,1)$ is the moving-aggregate parameter. Setting $\phi=0$ reduces
MAGMAR(1,1) to MAG(1).

The conditional density contribution of the raw Gaussian MAGMAR state is
\begin{equation}
f_{\phi,\theta}(u_t\mid u_{t-1},w_{t-1})
=
c_\phi(u_t,u_{t-1})\,
c_\theta\!\left(h_\phi(u_t,u_{t-1}),w_{t-1}\right),
\label{eq:magmar-density}
\end{equation}
where $c_\rho$ is the Gaussian copula density. The innovation state is updated by
\[
w_t=h_\theta\!\left(h_\phi(u_t,u_{t-1}),w_{t-1}\right).
\]
For MAG(1), the density contribution reduces to $c_\theta(u_t,w_{t-1})$ and
$w_t=h_\theta(u_t,w_{t-1})$.

\subsection{Stationary marginal adjustment}

A raw MAGMAR process generated by \eqref{eq:magmar} need not have a uniform
stationary marginal \citep{pappert2026mag}. Let
\begin{equation}
\Psi_{\eta}(u)=\Pp_\eta(U_t\le u),
\qquad
\eta=(\phi,\theta)',
\label{eq:psi}
\end{equation}
denote the stationary marginal CDF of the raw latent process. When $\Psi_\eta$ is
continuous,
\begin{equation}
\widetilde U_t=\Psi_\eta(U_t)
\label{eq:uniform-adjust}
\end{equation}
is marginally uniform. The observed discrete process is defined by
\begin{equation}
A_t=F_t^{-1}(\widetilde U_t;\gamma).
\label{eq:obs-map}
\end{equation}
Thus observing $A_t=a_t$ is equivalent to
$\widetilde U_t\in I_t(a_t;\gamma)$ and, in the raw MAGMAR state,
\begin{equation}
U_t\in
J_t(a_t;\eta,\gamma)
:=
\left(
\Psi_\eta^{-1}\{\ell_t(a_t;\gamma)\},
\Psi_\eta^{-1}\{r_t(a_t;\gamma)\}
\right].
\label{eq:raw-interval}
\end{equation}
The inverse stationary transformation is therefore part of every likelihood
evaluation.

\subsection{Observed-count rectangle likelihood}
\label{subsec:exact-likelihood}

Let $\xi=(\eta',\gamma')'$ collect dependence and margin parameters and define
\[
\mathcal I_T(a_{1:T};\gamma)=\prod_{t=1}^T I_t(a_t;\gamma).
\]
If $\widetilde g_{\eta,T}$ is the joint density of
$(\widetilde U_1,\ldots,\widetilde U_T)$, the observed-data likelihood is
\begin{equation}
L_T^D(\xi)
=
\int_{\mathcal I_T(a_{1:T};\gamma)}
\widetilde g_{\eta,T}(u_{1:T})\,du_{1:T}.
\label{eq:exact-lik}
\end{equation}
Equivalently, this is the probability under the raw MAGMAR process that all $U_t$
fall inside the parameter-dependent intervals in \eqref{eq:raw-interval}.

\begin{theorem}[Randomized integral representation]
\label{thm:randomized-integral}
Suppose $\widetilde g_{\eta,T}$ exists and is integrable. For iid
$V_t\sim\Unif(0,1)$ define
\[
u_t(V_t;\gamma)=\ell_t(a_t;\gamma)+V_t m_t(a_t;\gamma).
\]
Then
\begin{equation}
L_T^D(\eta,\gamma)
=
\left\{\prod_{t=1}^T m_t(a_t;\gamma)\right\}
\E_V\!\left[
\widetilde g_{\eta,T}
\{u_1(V_1;\gamma),\ldots,u_T(V_T;\gamma)\}
\right].
\label{eq:randomized-integral}
\end{equation}
\end{theorem}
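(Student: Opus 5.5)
The plan is a coordinatewise affine change of variables in the rectangle integral \eqref{eq:exact-lik}, after which the result is read as an expectation over independent uniforms. Start from
\[
L_T^D(\xi)=\int_{\mathcal I_T(a_{1:T};\gamma)}\widetilde g_{\eta,T}(u_{1:T})\,du_{1:T},
\qquad
\mathcal I_T(a_{1:T};\gamma)=\prod_{t=1}^T\bigl(\ell_t(a_t;\gamma),r_t(a_t;\gamma)\bigr].
\]
By \eqref{eq:interval-width}, each side has length $r_t-\ell_t=m_t(a_t;\gamma)$. Whether the endpoint is open or closed does not matter, because Lebesgue measure puts no mass on the boundary faces.

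\textbf{Degenerate case.} If some $m_t(a_t;\gamma)=0$, the rectangle is Lebesgue-null. Then both sides of \eqref{eq:randomized-integral} vanish: the left because the integral is over a null set, the right because of the product prefactor. This case is dispatched immediately. From now on assume every $m_t(a_t;\gamma)>0$.

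\textbf{Change of variables.} Define
\[
\tau:(0,1]^T\to\mathcal I_T(a_{1:T};\gamma),\qquad
\tau(v)_t=\ell_t(a_t;\gamma)+v_t\,m_t(a_t;\gamma).
\]
This map is an affine bijection. Its Jacobian matrix is diagonal, with constant determinant $\prod_t m_t(a_t;\gamma)$. Since $\widetilde g_{\eta,T}$ is nonnegative and integrable, the multivariate change-of-variables formula (or simply Tonelli plus the one-dimensional substitution applied coordinate by coordinate) gives
\[
L_T^D=\prod_{t=1}^T m_t(a_t;\gamma)\int_{(0,1]^T}\widetilde g_{\eta,T}\{\tau(v)\}\,dv .
\]

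\textbf{Identification as an expectation.} The vector $(V_1,\dots,V_T)$ of iid $\Unif(0,1)$ variables has joint density $\mathbbm 1_{(0,1)^T}$. Hence the remaining integral equals $\E_V[\widetilde g_{\eta,T}\{u_1(V_1;\gamma),\dots,u_T(V_T;\gamma)\}]$, and \eqref{eq:randomized-integral} follows. This is the same construction as the randomized transform in Proposition~\ref{prop:dt}, applied jointly across all $t$ with a fixed observed path.

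\textbf{Main obstacle: measure-theoretic bookkeeping.} The substance of the proof is just this change of variables; the care goes into justifying it. I would check three points:
\begin{enumerate}[leftmargin=2em]
\item The expectation is well-defined and finite. This holds because the integrand is nonnegative and its integral is bounded by $\int\widetilde g_{\eta,T}\le 1$.
\item The intervals $I_t(a_t;\gamma)$ depend only on the fixed observed data and on $\gamma$. Although \eqref{eq:dynamic-margin} uses the lagged rate $R_{t-1}$, that rate is evaluated at the observed $a_{t-1}$, so each rectangle side is deterministic and the map $\tau$ is truly affine.
\item If $\widetilde g_{\eta,T}$ is only defined almost everywhere, the identity still holds, since affine bijections with nonzero Jacobian preserve null sets.
\end{enumerate}
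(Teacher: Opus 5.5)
Your proposal is correct and follows the paper's own argument, which is just the coordinatewise substitution $u_t=\ell_t(a_t;\gamma)+v_t m_t(a_t;\gamma)$ with Jacobian $\prod_{t=1}^T m_t(a_t;\gamma)$; your added checks on null rectangles, endpoint conventions, and data-determined intervals under the dynamic margin are sound. One minor slip in your first check: the expectation equals $L_T^D/\prod_{t} m_t(a_t;\gamma)$, so it is bounded by $1/\prod_{t} m_t(a_t;\gamma)$ rather than by $1$, which still gives finiteness once every $m_t(a_t;\gamma)>0$.
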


\begin{proof}
Apply the coordinatewise change of variables
$u_t=\ell_t(a_t;\gamma)+v_t m_t(a_t;\gamma)$ to the rectangle integral
\eqref{eq:exact-lik}. The Jacobian determinant is
$\prod_{t=1}^T m_t(a_t;\gamma)$.
\end{proof}

The order of averaging and taking logarithms matters:
\[
\log \E_V\{\widetilde g_{\eta,T}(u(V;\gamma))\}
\neq
\E_V\{\log\widetilde g_{\eta,T}(u(V;\gamma))\}.
\]
Hence a continuous likelihood evaluated at one randomized transform is a different
criterion from the observed-count likelihood in \eqref{eq:exact-lik}.

\subsection{Identification and changing exposure}
\label{subsec:identification}

For a discrete margin define the probability grid
\[
\mathcal G_t(\gamma)
=
\{0,F_t(0;\gamma),\ldots,F_t(E_t;\gamma)=1\}
\]
and its mesh
\begin{equation}
\Delta_t(\gamma)=\max_{0\le a\le E_t}m_t(a;\gamma).
\label{eq:mesh}
\end{equation}
The observed discrete process reveals latent probabilities only on products of
these grids.

\begin{assumption}[Refining grids and block identification]
\label{ass:grid}
\begin{enumerate}[label=(\roman*)]
\item the marginal parameter $\gamma_0$ is identified from the
exposure-conditioned one-dimensional distributions;
\item for each fixed block length $d\ge2$, there exists a sequence of consecutive
blocks $B_k$ such that
\[
\max_{t\in B_k}\Delta_t(\gamma_0)\longrightarrow0;
\]
\item the adjusted latent $d$-dimensional CDF is continuous;
\item within the specified parametric MAGMAR family, equality of all finite block
distributions implies equality of the dependence parameter.
\end{enumerate}
\end{assumption}

\begin{theorem}[Parametric identification under refining exposure grids]
\label{thm:grid}
Under Assumption~\ref{ass:grid}, if two parameter vectors
$\xi_1=(\eta_1,\gamma_1)$ and $\xi_2=(\eta_2,\gamma_2)$ generate the same
finite-dimensional distributions for the observed count process conditional on the
refining exposure blocks, then $\xi_1=\xi_2$.
\end{theorem}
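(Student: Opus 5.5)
The proof separates the margin from the dependence. First, margins: equal finite-dimensional distributions of the observed counts give, in particular, equal one-dimensional conditional distributions of $A_t$ given $E_t$ and $\mathcal F_{t-1}$. Assumption~\ref{ass:grid}(i) then forces $\gamma_1=\gamma_2=:\gamma_0$. From here on both parameter vectors share the same CDF intervals $I_t(a;\gamma_0)$, the same probability grids $\mathcal G_t(\gamma_0)$, and the same meshes $\Delta_t(\gamma_0)$. The remaining task is to show $\eta_1=\eta_2$.

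Second, rectangles: fix a block length $d\ge2$ and a block $B_k=\{s+1,\ldots,s+d\}$ from Assumption~\ref{ass:grid}(ii). By \eqref{eq:obs-map}, the event $\{A_t\le a_t,\ t\in B_k\}$ equals $\{\widetilde U_t\le F_t(a_t;\gamma_0),\ t\in B_k\}$. Equality of observed block distributions therefore gives
\[
\widetilde C^{(d)}_{\eta_1}(g_1,\ldots,g_d)=\widetilde C^{(d)}_{\eta_2}(g_1,\ldots,g_d)
\]
for all grid points $g_j\in\mathcal G_{s+j}(\gamma_0)$. Here $\widetilde C^{(d)}_\eta$ is the $d$-dimensional CDF of the adjusted latent process. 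It is a copula, since the adjusted margins are uniform, and by stationarity it does not depend on $s$. This is the rectangle representation \eqref{eq:exact-lik} read at cumulative corners; equivalently, rectangle probabilities over $\prod_t I_t(a_t;\gamma_0)$ agree, and corner values follow by inclusion–exclusion.

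Third, density: fix an arbitrary $x\in[0,1]^d$. On block $B_k$, choose for each coordinate $j$ a grid point $g^{(k)}_j\in\mathcal G_{s_k+j}(\gamma_0)$ with $|g^{(k)}_j-x_j|\le\Delta_{s_k+j}(\gamma_0)$. This is possible because consecutive grid points differ by a mass $m_t(a;\gamma_0)$, which is at most the mesh. By Assumption~\ref{ass:grid}(ii), $g^{(k)}\to x$. Copulas are $1$-Lipschitz in each coordinate, so both $\widetilde C^{(d)}_{\eta_i}$ are uniformly continuous; Assumption~\ref{ass:grid}(iii) also covers this if a more general latent CDF is allowed. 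Passing to the limit gives $\widetilde C^{(d)}_{\eta_1}(x)=\widetilde C^{(d)}_{\eta_2}(x)$. Since $x$ was arbitrary, all $d$-dimensional block distributions coincide. Because $d$ was arbitrary, all finite block distributions coincide, and Assumption~\ref{ass:grid}(iv) yields $\eta_1=\eta_2$, hence $\xi_1=\xi_2$.

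The main obstacle is the second step. Conditioning on exposure blocks and on $\mathcal F_{t-1}$ must not break the stationarity of the latent copula across blocks. In the dynamic margin the grid itself depends on past counts, so I would argue conditionally on the realized conditioning path: the latent $\widetilde U$ process is independent of the exposures, and the grid at time $t$ is a measurable function of $(E_t,A_{t-1})$. Equality of conditional block probabilities then still pins down the stationary latent copula at the realized grid points. A careful version would state the equality for every conditioning path along which the mesh vanishes, which is exactly what Assumption~\ref{ass:grid}(ii) provides.
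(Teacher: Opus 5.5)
Your argument is correct and is essentially the paper's proof. Like the paper, you identify $\gamma$ from the one-dimensional laws via Assumption~\ref{ass:grid}(i), match the adjusted latent block CDFs on the product grids $\prod_{t\in B_k}\mathcal G_t(\gamma_0)$, extend to $[0,1]^d$ using the vanishing mesh and continuity, and conclude with (iv); your explicit appeal to stationarity (so all blocks share one latent copula) and your remark that (iii) is automatic once the adjusted margins are uniform are useful precisions of steps the paper leaves implicit.

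The closing dynamic-margin fix, however, does not work as stated, and the proof does not need it. $\widetilde U$ is independent of the exposures but not of $A_{t-1}$, so conditioning on past counts conditions the latent process on past intervals. Moreover, inside a block the thresholds depend on the block's own counts. The resulting probabilities are therefore not values of the stationary copula on a product grid. The paper's proof likewise reads the block law on a fixed, exposure-determined product grid (Assumption~\ref{ass:grid}(ii) uses the deterministic mesh $\Delta_t(\gamma_0)$) and does not address the path-dependent case.
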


\begin{proof}
Equality of the one-dimensional count distributions identifies
$\gamma_1=\gamma_2=\gamma_0$. For a fixed block length $d$, equality of observed
block distributions implies equality of adjusted latent block CDFs at every point
of the corresponding product probability grid. By the refining-mesh condition, the
union of these grids is dense in $[0,1]^d$. Continuity extends equality from the
grids to all of $[0,1]^d$, and block identification of the parametric MAGMAR family
then yields $\eta_1=\eta_2$.
\end{proof}

The theorem is a population statement. In the observed sample, exposure equals one
for many early years and later rises above 300, so grid resolution changes markedly
through time. Section~\ref{subsec:finite-sample} evaluates that finite-sample
resolution directly through the empirical mesh, profile objectives, low-exposure
subsamples, and parameter recovery.

\section{Simulated likelihood and validation design}
\label{sec:estimation}

\subsection{Guided sequential Monte Carlo likelihood}

For the static margin, the parameter vector is
\[
\xi_{\mathrm{MAGMAR}}=(\beta_0,\log\kappa,\phi,\theta)',
\qquad
\xi_{\mathrm{MAG}}=(\beta_0,\log\kappa,\theta)'.
\]
For the dynamic margin, $\beta_1$ is added. For each candidate parameter vector,
the likelihood calculation proceeds as follows.

\begin{enumerate}[leftmargin=2.2em]
\item \textbf{Count intervals.} Construct the beta-binomial intervals
$I_t(A_t;\gamma)$ under the candidate static or dynamic margin.
\item \textbf{Stationary marginal map.} For MAGMAR, simulate a long stationary raw
path at the candidate $(\phi,\theta)$ and construct numerical maps
$\widehat\Psi_\eta$ and $\widehat\Psi_\eta^{-1}$.
\item \textbf{Raw-state intervals.} Map the count intervals back through
$\widehat\Psi_\eta^{-1}$. For MAG, whose raw construction is uniform in the
implementation, the count intervals are used directly.
\item \textbf{Initialization.} The first count contributes its fitted marginal
probability. A stationary simulation pool is conditioned on the first raw-state
interval to initialize particles for $(U_1,W_1)$.
\item \textbf{Sequential integration.} For each subsequent year, particles propose
$U_t$ uniformly over the current raw-state interval. Importance weights equal the
interval width times the conditional density in \eqref{eq:magmar-density}. Weights
are normalized and systematic resampling propagates particles and innovation states.
\item \textbf{Optimization.} The sum of sequential log normalizing constants is the
simulated log likelihood. Common random numbers are held fixed across parameter
evaluations within a run. BFGS optimization is used with multiple starts.
\end{enumerate}

The saved publication fits reported in Section~\ref{sec:results} use 6,000 particles
and an 80,000-state stationary pool. The lighter validation exercises use 2,000
particles and a 20,000-state pool unless stated otherwise, with stationary burn-in
3,000, BFGS maximum iteration count 250, and relative tolerance $2\times10^{-5}$.
Direction-specific and influential-year refits use 1,500 particles and 16,000-state
pools.


\subsection{Implementation and Monte Carlo checks}

The implementation contains four direct unit checks. The Gaussian $h$-function and
its inverse agree to machine precision over several dependence parameters. Simulated
MAG states are marginally uniform, and the stationary-adjusted MAGMAR states are
uniform after applying the estimated stationary map. Under MAG independence
($\theta=0$), the interval likelihood agrees with the product of marginal interval
masses to numerical precision. A $T=2$ brute-force integration check agrees with the
SMC likelihood within $1.61\times10^{-3}$.

Monte Carlo stability is assessed in two ways. First, the fitted dynamic MAG and
MAGMAR parameters are re-evaluated on 12 independent streams for each combination of
particle counts $2{,}000$, $4{,}000$, $8{,}000$, and $12{,}000$ and stationary-pool
sizes $20{,}000$, $50{,}000$, and $100{,}000$. Second, four independent full refits
at the default validation configuration quantify simulation-induced variation in
both likelihood differences and parameter estimates.

\subsection{Profiles, bootstrap, and finite-sample recovery}

The dynamic-margin specification is used for the main finite-sample validation
because it directly addresses the strongest alternative explanation for apparent
copula persistence. Independent-stream nuisance profiles are computed for $\phi$ and
$\theta$. A parametric bootstrap simulates count series under fitted dynamic MAG on
the actual exposure path and refits MAG and MAGMAR. Parameter recovery simulates
from the reference dynamic MAGMAR fit on the same exposure path and re-estimates all
parameters.

The bootstrap is used as a numerical calibration exercise rather than as a source of
high-precision tail probabilities. With only 49 replications, its Monte Carlo
resolution is coarse. In addition, the stored run contains one obvious null-fit
optimization failure, producing an LR statistic above 1,500, and several negative
computed LR values caused by the alternative optimizer not fully recovering the
nested null solution. These features are reported transparently in
Section~\ref{subsec:dynamic-results}; they do not support a conventional Wilks-style
significance claim.

\section{Empirical results}
\label{sec:results}

\subsection{Overdispersion and marginal persistence}
\label{subsec:margin-results}

The static beta-binomial margin has fitted precision
$\widehat\kappa=44.70$ and substantially improves on the exposure-only binomial
margin used in preliminary analysis. More importantly for the dependence analysis,
its Pearson residuals retain strong serial structure: the Ljung--Box statistic at
lag 10 is 102.43, with $p\approx1.8\times10^{-17}$. The dynamic beta-binomial margin
changes this diagnostic sharply. With
\[
\widehat\beta_0=-2.205,
\qquad
\widehat\beta_1=4.105,
\qquad
\widehat\kappa=104.37,
\]
the Ljung--Box statistic falls to 10.04 and the nominal $p$-value rises to 0.437.
The dynamic count margin therefore absorbs essentially all residual linear serial
correlation detected by this diagnostic.

\begin{table}[t]
\centering
\caption{Beta-binomial margin diagnostics. The static fit uses 1963--2025; the dynamic fit begins in 1964 because it conditions on the lagged migration rate.}
\label{tab:margin-results}
\begin{tabular}{lrrrrrr}
\toprule
Margin & $n$ & $\log L$ & AIC & $\widehat\kappa$ & Residual SD & LB(10) $p$ \\
\midrule
Static beta-binomial  & 63 & -189.792 & 383.584 & 44.703  & 1.019 & $<10^{-16}$ \\
Dynamic beta-binomial & 62 & -174.015 & 354.029 & 104.373 & 1.125 & 0.437 \\
\bottomrule
\end{tabular}
\end{table}

Because the two rows use different starting years, the likelihood criteria in
Table~\ref{tab:margin-results} are descriptive rather than a formal nested test. The
serial-diagnostic change is the central point: a static observation mean leaves a
large predictable component in the counts, whereas a one-lag migration-rate term
removes it.

\subsection{Static-margin interval likelihood strongly favors MAGMAR}
\label{subsec:static-results}

Table~\ref{tab:static-primary} reports the saved publication fits under the static
beta-binomial margin. Gaussian MAGMAR(1,1) has simulated observed-count log
likelihood $-170.688$, compared with $-178.876$ for Gaussian MAG(1). The descriptive
nested statistic is
\[
2(-170.688+178.876)=16.376.
\]
AIC falls from 363.75 to 349.38 and BIC from 370.18 to 357.95. The MAGMAR fit has a
large positive autoregressive parameter, $\widehat\phi=0.879$, and a negative
moving-aggregate parameter, $\widehat\theta=-0.530$.

\begin{table}[t]
\centering
\caption{Static beta-binomial observed-count interval-SML fits.}
\label{tab:static-primary}
\begin{tabular}{lrrrrrrr}
\toprule
Model & $k$ & $\log L_{\mathrm{SML}}$ & AIC & BIC & $\widehat\kappa$ & $\widehat\phi$ & $\widehat\theta$ \\
\midrule
MAG(1)      & 3 & -178.876 & 363.752 & 370.181 & 63.238 & ---   & 0.404 \\
MAGMAR(1,1) & 4 & -170.688 & 349.376 & 357.949 & 89.026 & 0.879 & -0.530 \\
\bottomrule
\end{tabular}
\end{table}

The static-margin result is clear as a statement about that observation model. It is
not sufficient by itself for a structural interpretation of latent persistence,
because the static beta-binomial residuals remain serially correlated. This is why
the dynamic margin is treated as the central specification sensitivity rather than
as an optional appendix check.

\subsection{Dynamic margin: a much weaker and less sharply identified MAGMAR increment}
\label{subsec:dynamic-results}

On the common 1964--2025 sample, the reference dynamic-margin MAG fit has log
likelihood $-173.888$. The reference MAGMAR fit improves this to $-173.119$, a gain
of only 0.769 log-likelihood units and an LR statistic of 1.537. With one additional
parameter, MAGMAR has slightly higher AIC and BIC than MAG. The corresponding
reference dependence estimates are
\[
\widehat\phi=0.684,
\qquad
\widehat\theta=-0.502.
\]

\begin{table}[t]
\centering
\caption{Reference dynamic beta-binomial interval-SML fits on 1964--2025.}
\label{tab:dynamic-primary}
\small
\begin{tabular}{lrrrrrrrr}
\toprule
Model & $k$ & $\log L_{\mathrm{SML}}$ & AIC & BIC & $\widehat\beta_1$ & $\widehat\kappa$ & $\widehat\phi$ & $\widehat\theta$ \\
\midrule
MAG(1)      & 4 & -173.888 & 355.776 & 364.284 & 3.645 & 104.194 & ---   & 0.079 \\
MAGMAR(1,1) & 5 & -173.119 & 356.238 & 366.874 & 4.308 & 92.008  & 0.684 & -0.502 \\
\bottomrule
\end{tabular}
\end{table}

The reference pair should not be read as establishing a unique global maximum for
the dynamic MAGMAR objective. Independent nuisance profiles reveal a higher-$\phi$
region. Averaging over two independent likelihood streams, the profile log
likelihood rises from $-172.94$ at $\phi=0.684$ to $-171.35$ at $\phi=0.90$.
Higher-precision full refits display the same behavior: the MAG fit remains near
$-173.88$, whereas the MAGMAR solution can move toward $\phi\approx0.93$ and
$\theta\approx-0.63$. This is evidence of a broad or multimodal dependence surface,
not a basis for selecting one numerical mode as a precise estimator.


The 49-replication parametric bootstrap around the reference dynamic MAG fit is also
non-decisive. The observed LR is 1.537. Retaining all stored replications gives a
plus-one upper-tail probability of 0.12. One replication is an obvious numerical
failure: the null MAG refit lands at log likelihood $-943.7$ while the MAGMAR refit
is near $-165.6$, producing an LR above 1,500 despite a nominal convergence flag.
Treating this run as invalid leaves 48 replications and a plus-one exceedance rate of
approximately 0.10. Several negative computed LRs further indicate optimizer noise
in the nested comparison. The bootstrap therefore supports the qualitative
conclusion that the reference dynamic MAGMAR increment is not strongly significant,
but it is not used for precise tail inference.

\subsection{Monte Carlo stability and profile geometry}
\label{subsec:numerical-results}

At the fixed reference parameters, paired independent-stream evaluations are much
more stable than the full refits. Across 12 particle/pool configurations, the mean
MAGMAR--MAG log-likelihood difference ranges from 0.395 to 0.690. The share of
replications with a positive difference ranges from 0.833 to 1.000, and the mean
MAGMAR median ESS is essentially equal to the particle count in every configuration.
Thus the particle filter itself is not collapsing around the reference solution.

\begin{table}[t]
\centering
\caption{Selected paired Monte Carlo diagnostics at the reference dynamic-margin parameters. Each row uses 12 independent streams.}
\label{tab:paired-mc}
\begin{tabular}{rrrrrr}
\toprule
Particles & Pool & Mean $\Delta\ell$ & SD $\Delta\ell$ & MCSE(mean) & Share $\Delta\ell>0$ \\
\midrule
2,000  & 20,000  & 0.690 & 0.493 & 0.142 & 0.917 \\
4,000  & 100,000 & 0.582 & 0.115 & 0.033 & 1.000 \\
8,000  & 100,000 & 0.592 & 0.159 & 0.046 & 1.000 \\
12,000 & 100,000 & 0.497 & 0.126 & 0.036 & 1.000 \\
\bottomrule
\end{tabular}
\end{table}

Four independent full refits at the default validation configuration give mean
$\Delta\ell=1.316$ with MCSE 0.235. The mean dependence estimates are
$\widehat\phi=0.691$ and $\widehat\theta=-0.512$, with MCSEs 0.012 and 0.023,
respectively. The margin estimates are similarly stable. The contrast between stable
local refits and a profile that improves at high $\phi$ is characteristic of a
multimodal or ridge-like objective: simulation noise is manageable locally, while
global separation of the dependence components is weaker.

\subsection{Finite-sample grid resolution and parameter recovery}
\label{subsec:finite-sample}

The empirical probability-grid mesh confirms the role of changing exposure in
Theorem~\ref{thm:grid}. Under the reference dynamic MAGMAR margin, the largest cell
width has median 0.077, minimum 0.021, and maximum 0.904. The correlation between
exposure and the maximum grid-cell width is $-0.716$. The very coarse early years are
therefore fundamentally less informative about latent probabilities than the later
high-exposure years.

\begin{figure}[H]
\centering
\includegraphics[width=0.88\textwidth]{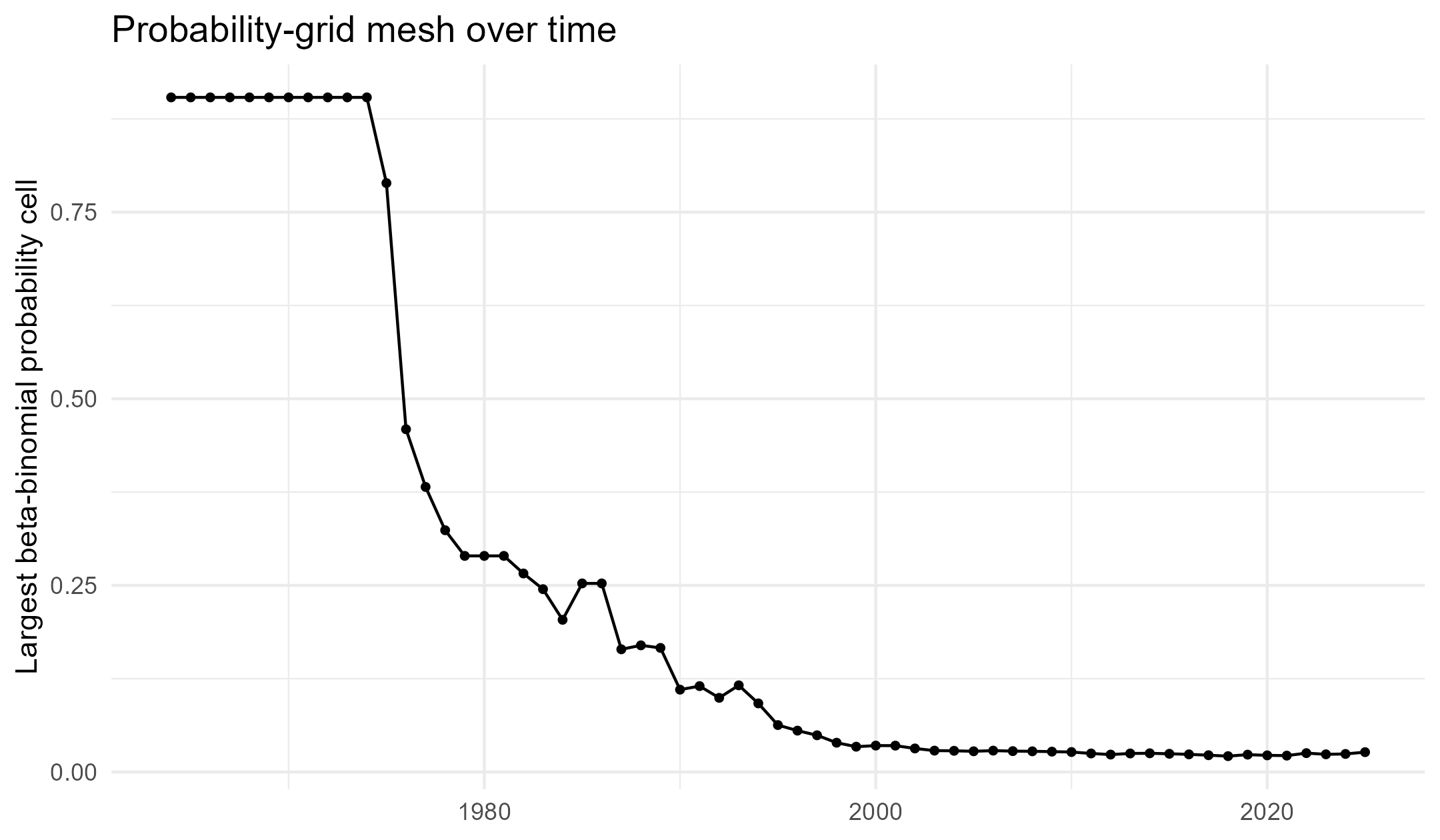}
\caption{Maximum beta-binomial probability-grid cell width by year under the reference dynamic-margin fit. Larger exposure produces a finer observable probability grid.}
\label{fig:grid-mesh}
\end{figure}

Parameter-recovery experiments make the finite-sample implication explicit. Twenty
series are simulated from the reference dynamic MAGMAR fit on the actual exposure
path and then re-estimated. All runs return convergence flags and none are classified
as boundary solutions. The margin parameters recover well: biases are 0.006 for
$\beta_0$, $-0.004$ for $\beta_1$, and $-1.80$ for $\kappa$. Dependence recovery is
much weaker. The bias and RMSE are $-0.329$ and 0.470 for $\phi$, and 0.324 and
0.475 for $\theta$.

\begin{table}[t]
\centering
\caption{Finite-sample parameter recovery on the actual exposure path, 20 replications.}
\label{tab:recovery}
\begin{tabular}{lrrrr}
\toprule
Parameter & True value & Bias & RMSE & 2.5\%--97.5\% of estimates \\
\midrule
$\phi$   & 0.684 & -0.329 & 0.470 & $[-0.133,\ 0.793]$ \\
$\theta$ & -0.502 & 0.324 & 0.475 & $[-0.763,\ 0.341]$ \\
$\beta_0$ & -2.241 & 0.006 & --- & --- \\
$\beta_1$ & 4.308 & -0.004 & --- & --- \\
$\kappa$  & 92.008 & -1.799 & --- & --- \\
\bottomrule
\end{tabular}
\end{table}

The pattern is informative: the exposure-conditioned mean is recoverable at this
sample size, while the decomposition of serial dependence into $\phi$ and $\theta$
is not sharply resolved. This is consistent with the profile geometry and with the
change in model ranking after adding the lagged count rate to the margin.

Low-exposure subsamples reinforce the same conclusion. Starting the dynamic analysis
when exposure first reaches 25, 50, or 100 produces descriptive MAGMAR--MAG LR
statistics of approximately $-0.01$, 2.84, and $-4.35$, respectively. These signs
are not interpreted literally as nested LR statistics because the simulated
optimization can return a lower alternative objective. The relevant result is that
there is no stable MAGMAR advantage across exposure thresholds.

\subsection{Influential years and directional decomposition}

The largest dynamic beta-binomial Pearson residuals occur in 1975 and 1974, when
exposure is extremely small. Masking either year, or both together, leaves the
reference MAGMAR dependence parameters close to the full-sample local solution.
Masking both gives $\widehat\phi=0.684$ and $\widehat\theta=-0.502$ and a descriptive
MAGMAR--MAG LR of 1.34. The local dependence pattern is therefore not generated by a
single exceptional early observation.

\begin{figure}[H]
\centering
\includegraphics[width=0.88\textwidth]{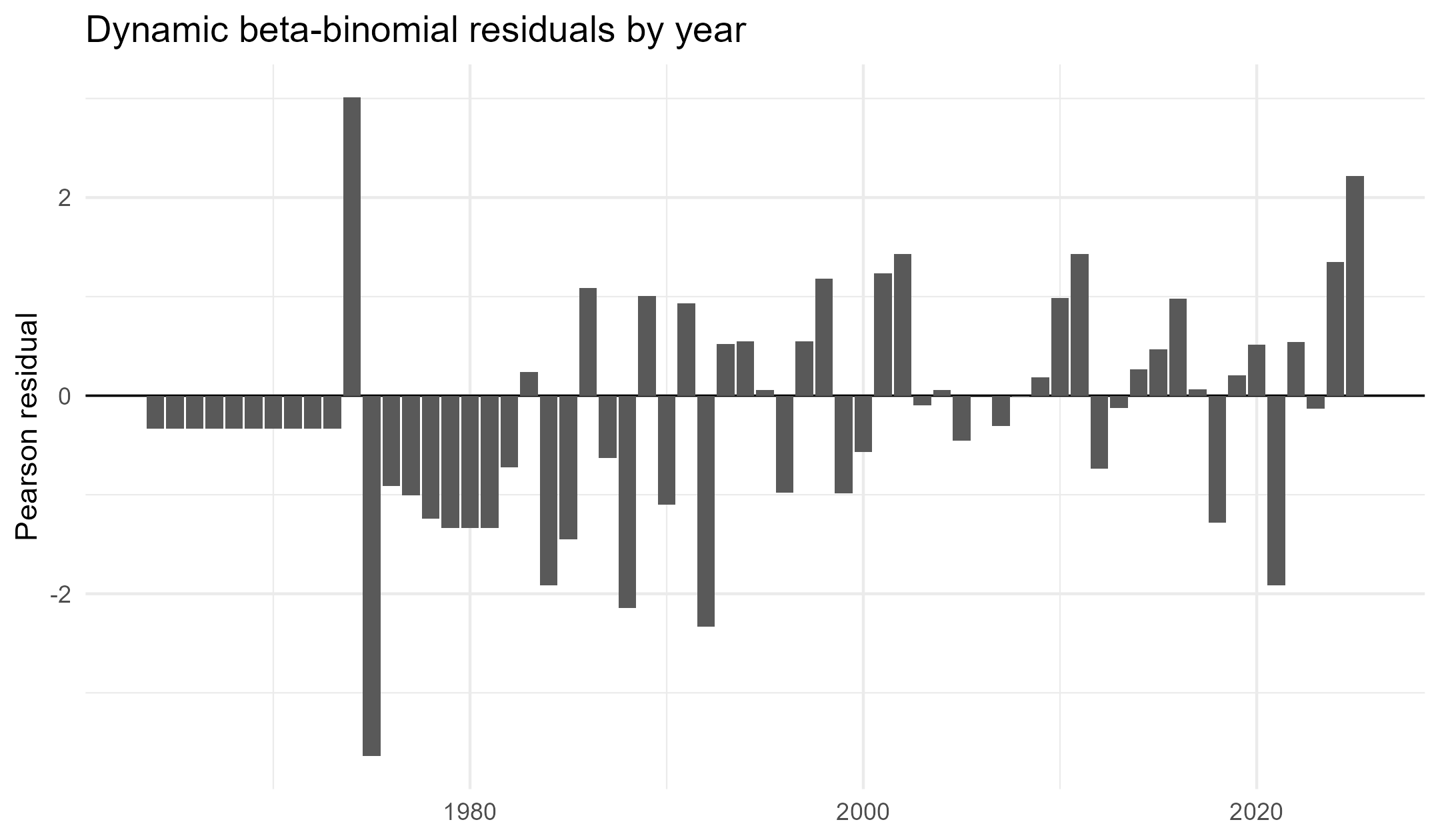}
\caption{Pearson residuals from the dynamic beta-binomial count margin. The two largest absolute residuals occur in 1975 and 1974, when exposure is minimal.}
\label{fig:influence}
\end{figure}

Direction-specific interval fits are exploratory because splitting total migration
activity into upgrades and downgrades reduces the effective information further.
Under static beta-binomial margins, MAG outperforms MAGMAR for downgrades, while
MAGMAR improves the upgrade log likelihood by 1.64 units. The separate fits also
push the MAG moving-aggregate parameter above 0.88 and produce very low minimum ESS
in some years. These results suggest asymmetry between upgrade and downgrade
activity, but they are not used for the primary structural comparison.

\section{Discussion}
\label{sec:discussion}

The central methodological result is that the discrete observation layer matters.
A sovereign migration count maps to an interval of latent probability values, and
MAGMAR's stationary marginal adjustment changes that interval as the dependence
parameters change. Integrating over the interval is therefore part of the model,
not a cosmetic correction to a continuous transformed-series likelihood.

The central empirical result is a separation problem between marginal and copula
persistence. If the beta-binomial migration probability is static, the observed
count likelihood strongly favors MAGMAR. Once the migration probability responds to
the previous year's migration rate, the count margin itself explains nearly all of
the residual serial correlation detected by a Ljung--Box diagnostic. The remaining
increment from the reference MAGMAR solution is small, and the likelihood surface
contains a high-$\phi$ region that is not well separated from the moderate-$\phi$
solution. In short annual data, richer marginal dynamics and richer latent
dependence can substitute for one another.

This interaction explains why a single model-selection statistic is not the right
summary of the application. Under the static margin, AIC and BIC strongly favor
MAGMAR. Under the reference dynamic local solution, they favor MAG. Independent
profiles and higher-precision refits then reveal that the dynamic MAGMAR objective
can improve again at large $\phi$. The numerical diagnostics are therefore part of
the empirical result: they show that the data contain evidence of serial structure,
but do not sharply allocate that structure between a lagged conditional mean and two
copula dependence channels.

The parameter-recovery experiment provides the clearest finite-sample interpretation.
The dynamic beta-binomial coefficients and precision parameter are recovered with
small bias, whereas the dependence parameters have RMSEs near 0.47 and broad
empirical ranges. This distinction is consistent with the observed probability-grid
mesh. Later high-exposure years provide fine count intervals, but the series still
contains only 62 dynamic observations, and the early part of the sample contributes
very coarse probability information.

For credit-risk applications, the framework is most useful when the research
question concerns aggregate migration activity rather than individual transition
matrices. It complements transition-matrix and intensity models by providing a way
to model persistent annual variation in the total amount of rating movement after
conditioning on exposure. The analysis also gives a practical specification lesson:
a latent dependence model should be compared against a dynamic count margin before
persistence is assigned to the copula recursion.

\section{Conclusion}
\label{sec:conclusion}

This paper develops a discrete interval-likelihood framework for applying MAG and
MAGMAR copula time-series models to sovereign rating-migration counts with changing
exposure. The observed count identifies a CDF interval rather than a unique PIT
value, and the MAGMAR stationary marginal transformation makes the corresponding raw
latent interval parameter-dependent. A guided SMC algorithm integrates over these
intervals directly.

In the sovereign application, a static beta-binomial margin produces strong evidence
in favor of MAGMAR relative to MAG. That result changes once marginal persistence is
modeled explicitly. A beta-binomial mean driven by the lagged migration rate removes
the residual serial correlation left by the static margin, and the reference dynamic
MAGMAR increment becomes small. At the same time, independent profiles and
higher-precision refits reveal a high-autoregressive-dependence region, while
parameter recovery shows that $\phi$ and $\theta$ are weakly resolved in samples of
this length.

The broader implication is twofold. First, discrete copula time-series models should
be estimated on the probability intervals implied by the observations rather than
on arbitrary randomized representatives of those intervals. Second, in short count
series, evidence for latent serial dependence must be interpreted jointly with the
dynamics allowed in the observation margin. For sovereign migration counts, the
interval likelihood identifies persistent structure, but the available sample does
not support a sharp decomposition of that persistence into marginal and MAGMAR
components.

\appendix

\section{Implementation unit tests}
\label{app:unit-tests}

\begin{table}[H]
\centering
\caption{Implementation checks used in the current validation run.}
\label{tab:unit-tests}
\small
\begin{tabular}{p{0.42\textwidth}p{0.25\textwidth}rr}
\toprule
Test & Setting & Statistic & Pass \\
\midrule
$h$-inverse identity & $\rho=-0.5$ & $4.44\times10^{-16}$ & yes \\
$h$-inverse identity & $\rho=0$ & $2.22\times10^{-16}$ & yes \\
$h$-inverse identity & $\rho=0.25$ & $2.50\times10^{-16}$ & yes \\
$h$-inverse identity & $\rho=0.6$ & $6.66\times10^{-16}$ & yes \\
MAG marginal uniformity & $\theta=0.35$ & 0.00334 & yes \\
Adjusted MAGMAR uniformity & $\phi=0.30,\theta=0.20$ & 0.00420 & yes \\
Independence interval identity & MAG $\theta=0$ & $8.88\times10^{-16}$ & yes \\
$T=2$ brute force vs SMC & MAG $\theta=0.35$ & 0.00161 & yes \\
\bottomrule
\end{tabular}
\end{table}

\section{Full paired Monte Carlo grid}
\label{app:paired-mc}

\begin{table}[H]
\centering
\caption{Paired independent-stream evaluation of the dynamic reference fits. Each configuration has 12 valid replications.}
\label{tab:paired-full}
\small
\begin{tabular}{rrrrrr}
\toprule
Particles & Pool & Mean $\Delta\ell$ & SD & MCSE(mean) & Share positive \\
\midrule
2,000  & 20,000  & 0.690 & 0.493 & 0.142 & 0.917 \\
2,000  & 50,000  & 0.395 & 0.196 & 0.057 & 1.000 \\
2,000  & 100,000 & 0.567 & 0.188 & 0.054 & 1.000 \\
4,000  & 20,000  & 0.477 & 0.407 & 0.117 & 0.833 \\
4,000  & 50,000  & 0.633 & 0.218 & 0.063 & 1.000 \\
4,000  & 100,000 & 0.582 & 0.115 & 0.033 & 1.000 \\
8,000  & 20,000  & 0.455 & 0.419 & 0.121 & 0.833 \\
8,000  & 50,000  & 0.607 & 0.314 & 0.091 & 1.000 \\
8,000  & 100,000 & 0.592 & 0.159 & 0.046 & 1.000 \\
12,000 & 20,000  & 0.592 & 0.392 & 0.113 & 0.917 \\
12,000 & 50,000  & 0.573 & 0.377 & 0.109 & 0.917 \\
12,000 & 100,000 & 0.497 & 0.126 & 0.036 & 1.000 \\
\bottomrule
\end{tabular}
\end{table}

\section{Independent-stream profile summaries}
\label{app:profiles}

\begin{table}[H]
\centering
\caption{Dynamic MAGMAR nuisance profiles averaged over two independent likelihood streams. Relative likelihoods are normalized separately within each fixed parameter.}
\label{tab:profiles}
\small
\begin{tabular}{lrrr}
\toprule
Fixed parameter & Fixed value & Mean profile $\log L$ & Relative $\log L$ \\
\midrule
$\phi$ & 0.000 & -176.616 & -5.261 \\
$\phi$ & 0.384 & -174.042 & -2.688 \\
$\phi$ & 0.513 & -174.607 & -3.253 \\
$\phi$ & 0.642 & -173.455 & -2.101 \\
$\phi$ & 0.684 & -172.939 & -1.585 \\
$\phi$ & 0.771 & -172.629 & -1.275 \\
$\phi$ & 0.900 & -171.354 & 0.000 \\
\midrule
$\theta$ & -0.800 & -173.642 & -0.856 \\
$\theta$ & -0.651 & -178.855 & -6.070 \\
$\theta$ & -0.502 & -172.785 & 0.000 \\
$\theta$ & -0.501 & -173.034 & -0.249 \\
$\theta$ & -0.352 & -175.804 & -3.019 \\
$\theta$ & -0.202 & -174.280 & -1.495 \\
\bottomrule
\end{tabular}
\end{table}

The large between-stream standard deviation at $\theta=-0.651$ and the monotone
improvement of the $\phi$ profile toward the top of the evaluated grid reinforce the
interpretation of a broad and numerically challenging dependence surface.

\section{Low-exposure and influential-year sensitivity}
\label{app:sensitivity}

\begin{table}[H]
\centering
\caption{Dynamic-margin low-exposure sensitivity. LR values are descriptive because simulated optimization can violate the exact nesting inequality.}
\label{tab:low-exposure}
\small
\begin{tabular}{rrrrrrrr}
\toprule
Exposure threshold & Start & $n$ & $\ell_{MAG}$ & $\ell_{MAGMAR}$ & LR & $\widehat\phi$ & $\widehat\theta$ \\
\midrule
25  & 1988 & 38 & -143.719 & -143.724 & -0.008 & 0.625 & -0.688 \\
50  & 1991 & 35 & -132.709 & -131.288 & 2.840 & 0.574 & -0.703 \\
100 & 1996 & 30 & -114.745 & -116.919 & -4.348 & 0.349 & -0.890 \\
\bottomrule
\end{tabular}
\end{table}

\begin{table}[H]
\centering
\caption{Influential-year sensitivity under the dynamic beta-binomial margin.}
\label{tab:influence}
\begin{tabular}{lrrrrr}
\toprule
Masked year(s) & $\ell_{MAG}$ & $\ell_{MAGMAR}$ & LR & $\widehat\phi$ & $\widehat\theta$ \\
\midrule
1975 & -170.099 & -168.964 & 2.271 & 0.700 & -0.480 \\
1974 & -171.175 & -168.975 & 4.399 & 0.678 & -0.484 \\
1975, 1974 & -167.465 & -166.795 & 1.339 & 0.684 & -0.502 \\
\bottomrule
\end{tabular}
\end{table}

\section{Direction-specific exploratory fits}
\label{app:direction}

\begin{table}[H]
\centering
\caption{Direction-specific interval fits under static beta-binomial margins.}
\label{tab:direction}
\small
\begin{tabular}{llrrrrr}
\toprule
Direction & Model & $\log L$ & AIC & BIC & $\widehat\phi$ & $\widehat\theta$ \\
\midrule
Downgrade & MAG    & -152.604 & 311.208 & 317.637 & ---   & 0.941 \\
Downgrade & MAGMAR & -154.218 & 316.436 & 325.009 & 0.199 & 0.148 \\
Upgrade   & MAG    & -159.072 & 324.145 & 330.574 & ---   & 0.884 \\
Upgrade   & MAGMAR & -157.428 & 322.855 & 331.428 & 0.554 & -0.229 \\
\bottomrule
\end{tabular}
\end{table}





\section*{Declarations of Interest}

The author reports no conflicts of interest and is solely responsible for the
content and writing of the paper.

\end{document}